\newcommand{\url}{}
\def\cF{{\mathcal F}}
\def\cF{{ F}}
\def\bE{{ E}}
\def\bq{{ q}}
\def\bc{{ c}}
\def\bv{{ v}}
\def\bw{{ w}}
\def\bF{{\mathbf F}}
\def\bj{{ j}}
\def\bV{{\mathbf V}}
\def\bW{{\mathbf W}}
\def\bQ{{\mathbf Q}}
\def\by{{y}}
\def\bY{{\mathbf Y}}
\def\cK{{\mathcal K}}
\def\cS{{\mathcal S}}
\newtheorem{thm}{THEOREM}[section]
\newtheorem{lem}{LEMMA}[section]
\newtheorem{cor}{COROLLARY}[section]
\theoremstyle{definition}
\begin{document}
 
\title{Entropic Chaoticity for the Steady State of a Current Carrying System.}

\author[*]{F. Bonetto}
\author[*]{M. Loss}
\affil[*]{School of Mathematics, Georgia Institute of Technology, Atlanta, GA}

\renewcommand\Authands{ and }

\maketitle

\begin{abstract}
The steady state for a system of $N$ particles under the influence of an 
external field and a Gaussian thermostat and colliding with random ``virtual'' 
scatterers can be obtained explicitly in the limit of small fields. We show that 
the sequence of steady state distributions, as $N$ varies, is a chaotic sequence 
in the sense that the $k$ particle marginal, in the limit of large $N$, is the 
$k$-fold tensor product of the 1 particle marginal. We also show that the 
chaoticity properties holds in the stronger form of entropic chaoticity.
\end{abstract}

\section{Introduction}
Since its introduction by Kac \cite{Kac} in 1956, the notion of a chaotic 
sequence has become an important concept in studying many body systems. Chaotic 
sequences and propagation of chaos are the principal tools for constructing 
effective equations for many body problems. The aim of this article is to give 
yet another example of the interplay between chaotic sequences and effective 
equations.

In \cite{BDLR} the authors consider a system consisting of $N$ particles moving 
in a 2-dimensional torus and colliding with convex scatterers that form a 
dispersing billiard. The particles are subject to an external electric field 
$\bE$ and a Gaussian thermostat that keeps the kinetic energy fixed. The 
equation of motion between collisions are given by
\begin{equation}\label{eq2}
\left\{
\begin{array}{l}
   \dot\bq_i = \bv_i  \qquad\qquad\qquad\qquad i=1,\ldots,N \\
   \dot\bv_i = \bF_i = \bE-\frac{\displaystyle \bE\cdot
\bj}{\displaystyle U}\,\bv_i + \cF_i  \ ,\end{array}\ 
\right.
\end{equation}
where
\begin{equation}  \label{eq3}
    \bj(\bV)=\sum_{i=1}^N \bv_i,\qquad U(\bV) = \sum_{i=1}^N |\bv_i|^2 \ ,
\end{equation}
and $\cF_i$ is the force exerted on the $i$-th particle by collisions with the 
fixed scatterers. We use the notation $\bV = (v_1, v_2, \dots, v_N)$ where $v_i 
\in \mathbb R^2, i=1, \dots, N$. Very little is known about billiards with more 
than one particle. In particular there is no existence theorem for the SRB 
measure of this system.

The authors introduced in  \cite{BDLR}  a stochastic version of the above model 
in which the deterministic collisions are replaced by Poisson distributed 
collision processes. More precisely, in the time interval between $t$ and 
$t+dt$, the $i$-th particle has a probability $|\bv_i|^\alpha dt$ of suffering a 
collision. When a collision happens, the velocity of the particle is randomly 
updated, i.e. if the particle's velocity direction before the collision is 
$\omega=\bv/|\bv|$,  after the collision it will be distributed as 
$K(\omega'\cdot \omega)d\omega'$. The details of the collision kernel $K$ are 
largely irrelevant. For what follows it will be enough to assume that $K(x)>0$ 
for $x$ in a open set ${\cal U}\in[-1,1]$. We note that this stochastic process 
makes sense for any dimension $d$. We shall use, as mentioned before, the notation 
$\bV=(v_1, \dots, v_N)$, $\bQ=(q_1, \dots, q_N)$, with $v_i \in \mathbb R^d$, 
and $q_i \in  \mathbb{T}^d$ , $i=1, \dots , N$.

Let $W(\bQ,\bV,t;\bE)$ be the probability density of finding the particles at 
positions $\bQ$ with velocities $\bV$ at time $t$. The time evolution of 
$W(\bQ,\bV,t;\bE)$ is given by the master equation 
\begin{eqnarray}  \label{dW}
 \frac{\partial W(\bQ,\bV,t;\bE)}{\partial t} = 
&-&\sum_{i=1}^N\bv_i\cdot\nabla_{q_i} W(\bQ,\bV,t;\bE)
-\sum_{i=1}^N
 \nabla_{\bv_i}\left[\left(\bE-\frac{\bE\cdot\bj(\bV)}{U(\bV)}
\bv_i\right)W(\bQ,\bV,t;\bE)
\right]\nonumber\\
  &+&\sum_{i=1}^N|\bv_i|^\alpha\int_{\cS^{d-1}(1)} K(\omega_i \cdot \omega'_i)
\Bigl[W(\bQ,\bV_i',t;\bE)-W(\bQ,\bV,t;\bE)\Bigr]\,d\sigma^{d-1}
(\omega')\nonumber\\
\end{eqnarray}
where $\cS^{m}(R)$ is the $m$-dimensional sphere with radius $R$, and $d 
\sigma^{m}(\cdot)$ is the uniform, surfaces measure on $\cS^{m}(R)$. Further, if 
$\bV = (\bv_1,\ldots,\bv_i,\ldots,\bv_N)$ then  $\bV_i' = 
(\bv_1,\ldots,\bv_i',\ldots,\bv_N)$, with $\bv_i'=|\bv_i|\omega_i'$ if 
$\bv_i=|\bv_i|\omega_i$. Note that the variable $\bQ$ is not part of the 
dynamics, i.e., if the initial condition $W(\bQ,\bV,0)$ is independent of $\bQ$ 
so will be $W(\bQ,\bV,t; E)$. Moreover, if $W(\bQ,\bV,0)$ is concentrated on the 
surface of a given energy $U_0$, that is if 
\[
 W(\bQ,\bV,0)=\delta(U(\bV)-U_0)F(\bQ,\bV,0) \ ,
\]
then so will be the solution of \eqref{dW}:
\[
 W(\bQ,\bV,t;\bE)=\delta(U(\bV)-U_0)F(\bQ,\bV,t;\bE).
\]
Finally, if $F(\bQ,\bV,0)$ is a symmetric function so is $F(\bQ,\bV,t; \bE)$. 
Thus, from now on we will only consider symmetric, spatially homogeneous 
solutions concentrated on the surface of energy $U_0=N$, that is, on the $dN-1$ 
dimensional sphere $\cS^{dN-1}(\sqrt{N})$ of radius $\sqrt{N}$. In particular, 
this means that $F(\bQ,\bV,t;\bE)=F(\bV,t;\bE)$ will not depend on the positions 
$\bQ$.

Recall that the $k$-particle marginal $l^{(k)}_N(\bv_1, \dots, \bv_k)$  of a
distribution $L_N(\bV)$ on $\cS^{dN-1}(\sqrt{N})$ is defined by the equations
\[
\int_{\cS^{dN-1}(\sqrt{N})} \varphi(\bv_1, \dots, \bv_k) L_N(\bV) d
\sigma^{dN-1}(\bV) =
\int_{\mathbb{R}^{dk}} \varphi(\bv_1, \dots, \bv_k)
l^{(k)}_N(\bv_1, \dots, \bv_k) d \bv_1 \cdots d \bv_k  \ ,
\]
where $\varphi(\bv_1, \dots, \bv_k)$ ranges over the set of bounded continuous 
function on $\mathbb{R}^{dk}$. Simple computations show that 
\begin{equation} \label{kmarginal} 
l^{(k)}_N(\bV_k) = \sqrt{\frac{N}{N - |\bV_k|^2}} 
\int_{\cS^{d(N-k)-1}\left(\sqrt{N - |\bV_k|^2}\right)} L_N(\bV_k, \bV^k) d 
\sigma^{d(N-k)-1}(\bV^k) \ ,
\end{equation}
where $\bV_k = (\bv_{1} , \dots, \bv_k)$ and $\bV^k = (\bv_{k+1} , \dots, 
\bv_N)$. A sequence of densities $\{L_N\}_{N=1}^\infty $ forms a {\bf chaotic 
sequence with marginal $l$} if for any bounded continuous function $\varphi$
\begin{equation}\label{weak}
\lim_{N \to \infty} \int_{\cS^{dN-1}(\sqrt{N})} \varphi(\bV_k)
L_N(\bV) d \sigma^{dN-1}(\bV) =
\int_{\mathbb{R}^{dk}} \varphi(\bV_k)
\prod_{j=1}^k l(\bv_j)  d \bv_1 \cdots d \bv_k \ .
\end{equation}

It was shown in \cite{BCELM} that for finite time $t$ the master equation 
\eqref{dW} propagates chaos, i.e.,  the solution of the master equation 
\eqref{dW} forms a chaotic sequence  if the initial condition does. More 
precisely,  if for any bounded continuous function $\varphi(\bV_k)$ the initial 
condition $F_N(V,0)$ for the master equation \eqref{dW} satisfies
\[
\lim_{N \to \infty} \int_{\cS^{dN-1} } \varphi(\bV_k) F_N(\bV, 0) d\sigma(\bV)
=\int_{\mathbb{R}^{dk}}  \varphi(\bV_k) \prod_{j=1}^k f  (\bv_j, 0) d\bv_1
\cdots d\bv_k \ ,
\]
then
\[
\lim_{N \to \infty} \int_{\cS^{dN-1} } \varphi(\bV_k) F_N(\bV, t;
\bE) d\sigma(\bV) =
\int_{\mathbb{R}^{dk}}  \varphi(\bV_k) \prod_{j=1}^k f  (\bv_j, t;
\bE) d\bv_1 \cdots d\bv_k
\]
where
\[
  f (\bv ,t;\bE)=\lim_{N\to\infty}  f^{(1)}_N(\bv ,t;\bE) 
\]
satisfies the Boltzmann equation
\begin{equation}\label{BE} 
\frac{f(\bv,t;\bE)}{dt}+\nabla_\bv\left[\left(\bE-\frac{
\bE\cdot\hat\bj(t,E)}{u} \bv\right)f
\right]=|\bv|^\alpha\int_{\cS^{d-1}(1)} K(\omega \cdot \omega')
  \bigl[f(\bv',t;\bE)-f(\bv,t;\bE)\bigr]\,d\sigma^{d-1}(\omega') \ .
\end{equation}
Here,  $\hat \bj(t,E)$ and $u$ are given by the self consistent condition
\[
 \hat \bj(t,E)=\int \bv f(\bv,t;\bE) d\bv 
\qquad {\rm and} \qquad 
 u=\int |\bv|^2 f(\bv,t;\bE) d\bv.
\]
 It is easy to see that $u$ is independent of time and, since we have chosen
$U_0=N$, $u=1$. The initial condition is given by $f(\bv) = \lim_{N \to \infty} 
f^{(1)}_N(\bv,0)$.

Concerning the steady states, the situation is far from clear. In \cite{BDLR} 
and \cite{BCKLs} it was shown that a steady state $F_{\rm ss} (\bV;\bE)$ exists 
for the master equation \eqref{dW} provided that  $\bE \not= 0$. If $\bE = 0$ 
any density $F(\bV)$ that depends only on the magnitude of the velocities 
furnishes a stationary state. It is, however, an open question whether $F_{\rm 
ss} (\bV;\bE)$ tends to a limiting distribution as $\bE \to 0$. Interestingly, 
assuming that a limiting distribution {\it exists}, it can be computed exactly 
and it is given by
\begin{equation}\label{exN}
F_{\rm
ss}(\bV;0)=\delta(U(\bV)-N)\frac{1}{\widetilde{Z}_N}\frac{1}{
\left(\sum_{i=1}^N
|\bv_i|^{2+\alpha}\right)^{\frac{dN-1}{2+\alpha}} }:=\delta(U(\bV)-N)F_N(\bV) \ ,
\end{equation}
where $\widetilde{Z}_N$ is the normalization constant
\begin{equation} \label{normalization}
\widetilde{Z}_N= \int_{\cS^{dN-1}(\sqrt{N})} \frac{ d\sigma^{Nd-1} (\bV)}{
\left(\sum_{i=1}^N
|\bv_i|^{2+\alpha}\right)^{\frac{dN-1}{2+\alpha}} }\ .
\end{equation}
For details, the reader should consult \cite{BDLR} and \cite{BCKLs}. Thus, the 
electric field `selects' the right steady state as it tends to zero.

A similar problem exists on the level of the Boltzmann equation. Again it is
possible to show that the steady state $f_{\rm ss}(\bv;\bE)$ for the Boltzmann
equation \eqref{BE} exists and is unique if $E\not=0$. This clearly implies the
existence of a steady state current $\hat \bj_{\rm ss}(E)$. In \cite{BL} it
was shown that, assuming that $f(v)=\lim_{E\to 0} f_{\rm ss}(v,E)$ exists and
that $\hat \bj_{\rm ss}(E)=O(E)$, one has
\begin{equation}\label{exin}
 f(\bv)=\frac{\mu^{\frac{d}{2}}}{c}e^{-\left(\sqrt{\mu} 
|\bv|\right)^{2+\alpha}} \ ,
\end{equation}
where $c$ and $\mu$ are uniquely determined by the normalization  of $f$ and the
condition $u=1$. One easily computes
\begin{equation} \label{constants}
 c:= \int_{\mathbb R^d} e^{- |v|^{2+\alpha}} d v = \frac{2 
\pi^{\frac{d}{2}}}{\Gamma\left(\frac{d}{2}\right)}\frac{\Gamma\left(\frac{d}{
2+\alpha
} \right) }{(2+\alpha)}
\qquad  {\rm and } \qquad \mu:=\frac{1}{c} \int_{\mathbb R^d} |v|^2 
e^{-|v|^{2+\alpha}} dv = 
\frac{\Gamma\left(\frac{d+2}{2+\alpha}\right)}{\Gamma\left(\frac{d}{ 
2+\alpha} \right)}\ ,
\end{equation}
which for $\alpha=1$ and $d=2$ yield
\[
\mu=\frac{\Gamma\left(\frac{4}{3}\right)}{\Gamma\left(\frac{2}{3}\right)}
\approx 0.65948\qquad {\rm and} \qquad
c=\frac{2\pi}{3}\Gamma\left(\frac{2}{3}\right)
 \approx 2.83605 \ .
\]
For details the reader may consult \cite{BL} where the existence of the small
field limit of the steady state distribution is proved for $d=1$.

It is now natural to ask whether the distribution $F_N$ defined in \eqref{exN} 
is chaotic with marginal $f$ given by  \eqref{exin}. This cannot be deduced from 
the previous results on propagation of chaos since those results do not hold 
uniformly in time. A more serious impediment is the fact that the small field 
limits of the steady states are not known to exist. As explained before, the 
limit as $\bE \to 0$ seems to select a steady state for the master equation as 
well as for its Boltzmann version. It is far from clear that the selection 
mechanism is such as to preserve chaoticity. In this note we prove that, 
nevertheless, the distribution defined in \eqref{exN} is chaotic with marginal 
\eqref{exin}.

\medskip
\begin{thm} \label{chaoticss}
 Let $f^{(1)}_N(\bv)$ be the one particle marginal of $F_N(\bV)$ defined by
 \begin{equation}\label{NN}
  f^{(1)}_N(\bv_1)=\frac{\sqrt{N}}{\sqrt{N-
 |\bv_1|^2}} 
\frac{1}{\widetilde{Z}_N}\int_{\cS^{d(N-1)-1}\left(\sqrt{N-|\bv_1|^2}\right)}
F_N(\bV)d\sigma(\bV^1)
\end{equation}
and set
\begin{equation}
f(\bv)=\frac{\mu^{\frac{d}{2}}}{c}e^{-\left(\sqrt{\mu} |\bv|\right)^{2+\alpha}}
\end{equation} 
with the constants given by \eqref{constants}.  Then for any bounded continuous 
function $\varphi(\bv)$ 
\begin{equation}\label{prove}
 \lim_{N\to\infty} \int_{\mathbb{R}^d} \varphi(\bv) f_N^{(1)}(\bv) d \bv 
=\int_{\mathbb{R}^d} \varphi(\bv) f(\bv) d \bv 
\end{equation}
and for every $k$,
the $k$ particle marginal $f^{(k)}_N(\bv_1,\ldots,\bv_k)$ of $F_N(\bV)$  
satisfies
\begin{equation}\label{prove1}
 \lim_{N\to\infty}\int_{\mathbb{R}^{kd}} \varphi(\bv_1,\ldots,\bv_k)  f^{(k)}_N(\bv_1,\ldots,\bv_k) d \bv_1 \cdots d \bv_k =
 \int_{\mathbb{R}^{kd}} \varphi(\bv_1,\ldots,\bv_k)  \prod_{i=1}^k f(\bv_k)  d \bv_1 \cdots d \bv_k
\end{equation}
where, again, $\varphi$ is any bounded continuous function on $\mathbb{R}^{kd}$. 
Thus $F_N(\bV)$ from a chaotic sequence with marginal $f$.
\end{thm}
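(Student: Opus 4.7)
My plan is to exploit the Gamma-function identity
\begin{equation*}
s^{-\beta} \;=\; \frac{1}{\Gamma(\beta)}\int_0^\infty t^{\beta-1} e^{-t s}\, dt,
\qquad \beta := \frac{dN-1}{2+\alpha},
\end{equation*}
applied to $s=\sum_i |\bv_i|^{2+\alpha}$, which realizes the measure $F_N\,d\sigma$ on $\cS^{dN-1}(\sqrt N)$ as a probabilistic mixture
\begin{equation*}
F_N(\bV)\,d\sigma(\bV) \;=\; \int_0^\infty \lambda_N(dt)\; \nu_{N,t}(d\bV),\qquad \nu_{N,t}(d\bV) := \frac{e^{-t\sum_i |\bv_i|^{2+\alpha}}}{\Phi_N(t)}\, d\sigma(\bV),
\end{equation*}
with $\Phi_N(t) = \int_{\cS^{dN-1}(\sqrt N)} e^{-t\sum_i|\bv_i|^{2+\alpha}} d\sigma$ and $\lambda_N(dt) \propto t^{\beta-1}\Phi_N(t)\, dt$. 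The observation that unlocks the argument is that, for each fixed $t>0$, $\nu_{N,t}$ is exactly the joint distribution of $N$ iid random vectors $Y_1,\dots,Y_N$ with common density $p_t(\bv) \propto e^{-t|\bv|^{2+\alpha}}$, conditioned on $\sum_i|Y_i|^2=N$.

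The proof then reduces to two independent asymptotic steps. \emph{(i) Concentration of the mixing measure.} Writing $\Phi_N(t) = c_t^{N}\cdot 2\sqrt{N}\cdot \rho_N^t(N)$, where $c_t = \int e^{-t|\bv|^{2+\alpha}} d\bv = c\,t^{-d/(2+\alpha)}$ and $\rho_N^t$ is the density of $\sum_i |Y_i|^2$ under $p_t^{\otimes N}$, Cram\'er's theorem gives $\rho_N^t(N) = \exp\{-N I_t(1)+o(N)\}$ with $I_t$ the rate function of $N^{-1}\sum_i|Y_i|^2$. Since $I_t(1)\ge 0$ vanishes iff $\mathbb{E}_{p_t}[|Y|^2]=1$, and this equation has the unique solution $t_\star := \mu^{(2+\alpha)/2}$ --- for which a one-line substitution using \eqref{constants} gives $p_{t_\star}=f$ --- the pre-factors $t^{\beta-1}c_t^N$ contribute only an $O(\log N)$ correction once the leading $N\log c$ piece is extracted, so $-N^{-1}\log(t^{\beta-1}\Phi_N(t))\to I_t(1)$ and $\lambda_N\Rightarrow \delta_{t_\star}$. \emph{(ii) Equivalence of ensembles for $\nu_{N,t}$.} Since $p_{t_\star}=f$ has second moment equal to $1$ (by construction of $\mu$), the classical Kac-style theorem conditioning iid vectors on a sphere of the expected total energy yields that the $k$-marginal of $\nu_{N,t_\star}$ converges to $\prod_{j=1}^k f(\bv_j)$; by continuity of the tilted iid limit in the tilt parameter, the same holds uniformly for $t$ in a shrinking neighborhood of $t_\star$.

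Combining (i) and (ii), for any bounded continuous $\varphi$ on $\mathbb R^{kd}$, the identity
\begin{equation*}
\int \varphi\, f_N^{(k)}\, d\bv_1\cdots d\bv_k \;=\; \int_0^\infty \lambda_N(dt)\int \varphi\, d\nu_{N,t}^{(k)}
\end{equation*}
together with the bound $|\int\varphi\,d\nu_{N,t}^{(k)}|\le\|\varphi\|_\infty$ and a standard dominated-convergence argument gives the limit $\int \varphi \prod_{j=1}^k f(\bv_j)\, d\bv_1\cdots d\bv_k$, which is \eqref{prove}--\eqref{prove1}.

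The main obstacle is step (i): controlling $\Phi_N(t)$ uniformly in $t$ on all of $(0,\infty)$. I need both a sharp local CLT for $\sum_i|Y_i|^2$ near the saddle point to recover the correct $\sqrt{N}$ pre-factor, and crude but honest large-deviation upper bounds strong enough to absorb the blow-up of $t^{\beta-1}$ as $t\to 0$ and $t\to\infty$. A secondary subtlety is that typical $t$ in the support of $\lambda_N$ lie within $O(N^{-1/2})$ of $t_\star$ rather than exactly at $t_\star$, so the equivalence-of-ensembles step must be shown stable under such infinitesimal tilts --- standard, but requiring either a uniform local CLT on the sphere or an explicit stability statement for the tilted marginal.
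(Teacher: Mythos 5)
Your strategy is built on the same starting point as the paper -- the Gamma-integral representation $A^{-\gamma}=\Gamma(\gamma)^{-1}\int_0^\infty s^{\gamma-1}e^{-As}ds$ applied to $A=\sum_i|\bv_i|^{2+\alpha}$ -- but from there it takes a genuinely different route: you read the resulting formula as a mixture over $t$ of tilted i.i.d.\ laws $p_t\propto e^{-t|\bv|^{2+\alpha}}$ conditioned on the energy sphere, and you plan to combine concentration of the mixing measure $\lambda_N$ at $t_\star=\mu^{(2+\alpha)/2}$ (where indeed $p_{t_\star}=f$ and $\mathbb{E}_{p_{t_\star}}|Y|^2=1$; your identities $c_t=c\,t^{-d/(2+\alpha)}$ and $\Phi_N(t)=c_t^N\,2\sqrt N\,\rho_N^t(N)$ are correct) with an equivalence-of-ensembles statement for the conditioned measure. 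The scheme is coherent and would prove the theorem, but as written it is a program rather than a proof: the three loads it must bear -- (a) a density-level (local) version of Cram\'er/CLT for $\rho_N^t(N)$ giving a lower bound at the peak, (b) upper bounds uniform in $t$, including the $t\to0$ and $t\to\infty$ regions where $t^{\beta-1}$ and the sphere volume must be beaten, so that $\lambda_N\Rightarrow\delta_{t_\star}$, and (c) a Gibbs-conditioning/Kac-type chaoticity statement for the $k$-marginal of $\nu_{N,t}$ that is \emph{stable under tilts} $|t-t_\star|=O(N^{-1/2})$ -- are flagged as obstacles rather than carried out. Point (c) is not a cosmetic refinement: for fixed $t\neq t_\star$ the $k$-marginal of $\nu_{N,t}$ converges to a quadratically re-tilted density, not to $\prod_j f(\bv_j)$, so without the uniformity near $t_\star$ the final dominated-convergence step fails. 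Moreover (c) at $t=t_\star$ is essentially the theorem itself in disguise (a product measure with unit second moment conditioned on $\cS^{dN-1}(\sqrt N)$), so your reduction trades the original problem for a known-but-nontrivial conditioning result plus extra uniform local-CLT input, each of which is at least as much work as the paper's entire argument.

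For comparison, the paper stays with the same Gamma representation (Lemma \ref{basic}) but never decomposes over $t$: a change of variables (Lemma \ref{change}) rewrites $\int\varphi\,f_N^{(k)}$ as $\int\prod_i g(\by_i)H_N(\bY_k)$, where $H_N$ involves only the factor $\sqrt N/|\bY|$ and $\varphi(\sqrt N\,\bY_k/|\bY|)$; its pointwise limit follows from the weak law of large numbers via Chebyshev (Lemma \ref{example}, Corollary \ref{korollar}), the singular factor on the exceptional set is controlled by the arithmetic--geometric mean inequality together with the bathtub bound of Lemma \ref{pbound}, and dominated convergence concludes (Lemma \ref{limitnorm}). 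So the paper's proof is elementary and self-contained, while your approach buys a transparent ``canonical ensemble'' picture (and, as a by-product, concentration of the effective inverse temperature at $t_\star$), at the price of local CLT and uniform large-deviation machinery that you would still have to supply to make the argument complete.
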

\medskip

\section{Proof of Theorem \ref{chaoticss}.}

The following elementary lemma sets the stage for the proof. It will be
expressed in terms of the probability distribution 
\[
g(\bw) := \frac{e^{-|\bw|^{2+\alpha}}}{\int_{\mathbb{R}^d} e^{- |\bw|^{2+\alpha}} d\bw } \ .
\]
In addition to the constants  $c$ and $\mu$ given by  \eqref{constants} we need
\begin{equation} \label{sigma}
\sigma^2:=\int_{\mathbb{R}^d} (|\bw|^2-\mu^2)^2 
g(\bw)d\bw= 
\frac{\Gamma\left(\frac{d+4}{2+\alpha}\right)}
{\Gamma\left(\frac{d}{2+\alpha}\right) }
-
\frac{\Gamma\left(\frac{d+2}{2+\alpha}\right)^2}{\Gamma\left(\frac{d}{2+\alpha}
\right)^2 } \ .
\end{equation} 
\if
It is useful to recall the formulas
\begin{eqnarray}\label{muvar}
c&=&\int_{\mathbb{R}^d} e^{- |\bw|^{2+\alpha}} 
d\bw=\left|\cS^{d-1}\right|\frac{\Gamma\left(\frac{d}{2+\alpha}\right)
} { 2+\alpha}\nonumber\\
\mu&=&\int_{\mathbb{R}^d} |\bw|^2 
g(\bw)d\bw=\frac{\Gamma\left(\frac{d+2}{2+\alpha}\right)
} {\Gamma\left(\frac{d}{2+\alpha}\right) }\\
\nonumber
\end{eqnarray}
where
\[
 |\mathcal S^{d-1}| = \frac{2 \pi^{\frac{d}{2}}}{\Gamma\left(\frac{d}{2}\right)}
\]
is the surface area of the unit sphere $\mathcal S^{d-1}$. 
\fi
\begin{lem} \label{basic} The following formulas hold for $F_N(\bV)$:
\begin{equation} \label{formulaeff}
F_N(\bV) =  \frac{2+\alpha}{\Gamma(\frac{dN-1}{2 +\alpha})} \frac{1}{Z_N}
\int_0^\infty   t^{dN-1}   \prod_{j=1}^N g( \bv_j t ) \frac{dt}{t} \ ,
\end{equation} 
\begin{equation} \label{formulazee}
 Z_N=\frac{(2+\alpha)}{\Gamma\left(\frac{dN-1}{2+\alpha}\right)}\int_{
\mathbb{R}^{dN}}\frac{
\prod_{i=1}^Ng(\bw_i)}{|\bW|}\,d\bW
\end{equation}
and
\begin{align} \label{formulakay}
f^{(k)}_N(\bV_k) &= \sqrt{\frac{N}{N - |\bV_k|^2}}
\int_{\cS^{d(N-k)-1}\left(\sqrt{N - |\bV_k|^2}\right)}
F_N(\bV_k, \bV^k) d \sigma^{d(N-k)-1}(\bV^k)  \\
&=  \frac{2+\alpha}{\Gamma(\frac{dN-1}{2 +\alpha})Z_N }  \sqrt{\frac{N}{(N -
|\bV_k|^2)^{dk+1}}}\times\nonumber\\
&\phantom{=\frac{2+\alpha}{\Gamma(\frac{dN-1}{2 +\alpha})Z_N
}}\int_{\mathbb{R}^{d(N-k)}}    \prod_{j=1}^k g\left(\frac{\bv_j
|\bW^k|
}{\sqrt{N- |\bV_k|^2}}\right) |\bW^k|^{dk-1}
\prod_{j=k+1}^N g(\bw_j)  d\bW^k \ .\nonumber
\end{align}

\end{lem}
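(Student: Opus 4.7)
The plan is to deduce all three formulas from a single analytic identity combined with a polar parametrization of $\mathbb{R}^{dN}$ adapted to the constraint sphere. The key input is the Euler representation
\[
\frac{1}{A^{\beta}} = \frac{1}{\Gamma(\beta)} \int_0^\infty s^{\beta-1} e^{-sA}\, ds,
\]
which I apply with $A = \sum_{j=1}^N |\bv_j|^{2+\alpha}$ and $\beta = (dN-1)/(2+\alpha)$. After the substitution $s = t^{2+\alpha}$ one gets $s^{\beta-1}\,ds = (2+\alpha)\, t^{dN-2}\,dt$ and $e^{-sA} = c^N \prod_{j=1}^N g(t\bv_j)$. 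Plugging this into the explicit expression \eqref{exN} for $F_N$ and absorbing the factor $c^N$ into the normalization $\widetilde Z_N$ (which amounts to the definition of $Z_N$) gives \eqref{formulaeff}.

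For \eqref{formulazee} I enforce $\int_{\cS^{dN-1}(\sqrt{N})} F_N\,d\sigma = 1$. Substituting \eqref{formulaeff} and exchanging the order of integration reduces the computation to
\[
\int_0^\infty t^{dN-2} \int_{\cS^{dN-1}(\sqrt{N})} \prod_{j=1}^N g(t\bv_j)\, d\sigma^{dN-1}(\bV)\, dt.
\]
Now I parametrize $\mathbb{R}^{dN}$ by $\bW = t\bV$ with $\bV \in \cS^{dN-1}(\sqrt{N})$ and $t>0$; the standard spherical Jacobian reads $d\bW = \sqrt{N}\,t^{dN-1}\,dt\,d\sigma^{dN-1}(\bV)$, so that $d\bW/|\bW| = t^{dN-2}\,dt\,d\sigma^{dN-1}(\bV)$ since $|\bW| = t\sqrt{N}$. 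The displayed double integral is therefore identical with $\int_{\mathbb{R}^{dN}} |\bW|^{-1} \prod_j g(\bw_j)\,d\bW$, and solving the resulting equation for $Z_N$ yields \eqref{formulazee}.

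For the marginal \eqref{formulakay} I begin from \eqref{kmarginal}, insert \eqref{formulaeff}, and pull the product over $j = 1,\ldots, k$ (which is independent of $\bV^k$) outside the sphere integral. Applying the same polar parametrization to the tail variables --- writing $\bW^k = t\bV^k$ with $\bV^k$ on the sphere of radius $\sqrt{N-|\bV_k|^2}$, so that $t = |\bW^k|/\sqrt{N-|\bV_k|^2}$ --- a Jacobian computation shows that $t^{dN-2}\,dt\,d\sigma(\bV^k)$ equals $|\bW^k|^{dk-1}\,(N-|\bV_k|^2)^{-dk/2}\,d\bW^k$, while each $g(t\bv_j)$ for $j \le k$ becomes $g(\bv_j|\bW^k|/\sqrt{N-|\bV_k|^2})$. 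Combining this $(N-|\bV_k|^2)^{-dk/2}$ with the prefactor $\sqrt{N/(N-|\bV_k|^2)}$ from \eqref{kmarginal} produces exactly $\sqrt{N/(N-|\bV_k|^2)^{dk+1}}$, reproducing \eqref{formulakay}. The lemma is essentially algebraic once the Euler identity and the parametrization $\bW = t\bV$ are in place; there is no conceptual obstacle, and the main point of caution is tracking the Jacobians carefully so that the exponent of $(N-|\bV_k|^2)$ in the marginal formula comes out correctly, which is the step I would double-check most carefully.
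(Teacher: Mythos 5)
Your proof is correct and follows essentially the same route as the paper: the Euler Gamma-integral representation of $A^{-\gamma}$ with $s=t^{2+\alpha}$, followed by rewriting the resulting sphere-times-radial integrals as integrals over $\mathbb{R}^{dN}$ (respectively $\mathbb{R}^{d(N-k)}$) in polar form, and your Jacobian bookkeeping (including the exponent of $N-|\bV_k|^2$) checks out. The only cosmetic difference is that you get \eqref{formulazee} by imposing $\int_{\cS^{dN-1}(\sqrt{N})} F_N\,d\sigma=1$ with the single substitution $\bW=t\bV$, whereas the paper computes $\widetilde{Z}_N$ directly by scaling to the unit sphere and then rescaling the radial variable; the computations are identical.
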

\begin{proof}
Formula \eqref{formulaeff} follows from  \eqref{exN} and 
\begin{equation}\label{exp}
A^{- \gamma} = \frac{1}{\Gamma(\gamma)} \int_0^\infty s^\gamma e^{-As} \frac{ds}{s} \ ,
\end{equation}
valid for all $A>0$ and $\gamma >0$ by setting
\[
A = \left(\sum_{i=1}^N
|\bv_i|^{2+\alpha}\right) \  ,  \ \gamma = \frac{dN-1}{2+\alpha}
\]
and substituting $s =t^{2+\alpha}$. The normalization constant $\widetilde{Z}_N$,
given in \eqref{normalization}, is then 
\[
\widetilde{Z}_N=\frac{2+\alpha}{ \Gamma(\frac{dN-1}{2 +\alpha})} 
\int_{\cS^{Nd-1}(\sqrt N)}
\int_0^\infty   t^{dN-1}  \prod_{j=1}^N
e^{-\left(|\bv_j| t \right)^{2+\alpha}} \frac{dt}{t} d \sigma^{Nd-1} (\bV)
\]
which, using Fubini's theorem and changing variables $\bv_j = \sqrt N \bw_j$
equals 
 \[
 \frac{(2+\alpha)N^{\frac{Nd-1}{2}}}{\Gamma(\frac{dN-1}{2 +\alpha})}
\int_0^\infty  \int_{\cS^{Nd-1}(1)}    \prod_{j=1}^N e^{-( |\bw_j|t\sqrt N)
^{2+\alpha}}  d \sigma^{Nd-1} (\bW)  t^{dN-1}  \frac{dt}{t}  \ . 
  \]
One more variable change $r = \sqrt N t $  yields
\[
 \frac{(2+\alpha)}{\Gamma(\frac{dN-1}{2 +\alpha})}   \int_0^\infty
\int_{\cS^{Nd-1}(1)}    \prod_{j=1}^N e^{-( |\bw_j|r ) ^{2+\alpha}}  d
\sigma^{Nd-1} (\bW)   r^{dN-1} \frac{dr}{r} \ .  
\]
Taking into account the normalization in the definition of $g(w)$ one obtains
\[
Z_N = \frac{(2+\alpha)}{\Gamma(\frac{dN-1}{2 +\alpha})}   \int_0^\infty
\int_{\cS^{Nd-1}(1)}    \prod_{j=1}^N g ( \bw_jr)  d \sigma^{Nd-1} (\bW)
r^{dN-1} \frac{dr}{r} \ ,  
\]
which is the integral  \eqref{formulazee} written in terms of polar 
coordinates. 

To see \eqref{formulakay} we start with \eqref{kmarginal} and find
\[
f_N^{(k)}(\bV_k) =
\sqrt{\frac{N}{N -|\bV_k|^2}}
\frac{2+\alpha}{\Gamma(\frac{dN-1}{2 +\alpha})} \frac{1}{Z_N}
\int_{\cS^{d(N-k)-1}(\sqrt{N -|\bV_k|^2})} \int_0^\infty
t^{dN-1}   \prod_{j=1}^N g (\bv_j t) \frac{dt}{t}  d \sigma^{d(N-k)-1}(\bV^k) \
.
\]
Once more, using Fubini's theorem and changing variables $\bv_j = \sqrt{N -
|\bV_k|^2} \bw_j, j=k+1, \dots, N$ yields
\begin{align}
\sqrt{\frac{N}{N -|\bV_k|^2}}&
\frac{2+\alpha}{\Gamma(\frac{dN-1}{2 +\alpha})} \frac{1}{Z_N}
(N- |\bV_k|^2)^{\frac{d(N-k)-1}{2}} \times\nonumber\\
&\int_0^\infty   t^{dN-1}    \prod_{j=1}^k g(\bv_j t)
\int_{\cS^{d(N-k)-1}(1)}
\prod_{j=k+1}^N g\left(\sqrt{N - |\bV_k|^2} \bw_j t \right) d
\sigma^{d(N-k)-1}(\bW^k) \frac{dt}{t} \ . \nonumber
\end{align}
Changing variables $r = \sqrt{N - |\bV_k|^2} t$ yields the
expression
\begin{align*}
&\sqrt{\frac{N}{N - |\bV_k|^2}}
\frac{2+\alpha}{\Gamma(\frac{dN-1}{2 +\alpha})} \frac{1}{Z_N}
\frac1{(N-  |\bV_k|^2)^{\frac{dk}{2}}} \times\nonumber\\\
&\qquad \int_0^\infty   r^{d(N-k)-1}    \prod_{j=1}^k g\left(\frac{\bv_j
r}{\sqrt{N-
|\bV_k|^2}} \right) r^{dk-1}
\int_{\cS^{d(N-k)-1}(1)}
\prod_{j=k+1}^N g\left(\bw_j r \right)   d \sigma^{d(N-k)-1}(\bW^k) dr
\nonumber \ ,
 \end{align*}
 which equals
 \[
 \frac{2+\alpha}{\Gamma(\frac{dN-1}{2 +\alpha})Z_N}  \sqrt{\frac{N}{(N -
 |\bV_k|^2)^{dk+1}}}
  \int_{\mathbb{R}^{d(N-k)}}    \prod_{j=1}^k g\left(\frac{\bv_j |\bW^k|
}{\sqrt{N-|\bV_k|^2}} \right)|\bW^k|^{dk-1}     \prod_{j=k+1}^N
g\left(\bw_j \right)  d\bW^k \ . 
 \]
\end{proof}
The following elementary lemma will be used to reduce the computation of the 
large $N$ limit of \eqref{formulaeff}, \eqref{formulazee} and \eqref{formulakay} 
to the law of large numbers.
 \begin{lem} \label{pbound} Let $p$ be a probability distribution on $\mathbb R 
^d$ bounded by some constant $C$ and let $0 \le a < d$. Then
\begin{equation} \label{galphabound}
\int_{\mathbb R^d} \frac{p(y)}{|y|^a} dy \le  \frac{d}{d-a} \left( \frac{C 
|\mathcal S^{d-1}|}{d}\right)^{\frac{a}{d}}\ .
\end{equation}
\end{lem}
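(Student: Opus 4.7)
The plan is to combine a layer-cake representation of the weight $|y|^{-a}$ with two obvious bounds on the mass $p$ places in a ball, and then optimize the cross-over radius between them. Writing $|y|^{-a}=a\int_{|y|}^{\infty}r^{-a-1}\,dr$ and swapping integrals via Tonelli gives
\[
\int_{\mathbb R^d}\frac{p(y)}{|y|^{a}}\,dy \;=\; a\int_0^{\infty} r^{-a-1}\,M(r)\,dr,\qquad M(r):=\int_{|y|\le r}p(y)\,dy\ ,
\]
and the hypothesis $a<d$ is exactly what guarantees integrability at $r=0$ once the upcoming two-sided bound on $M(r)$ is inserted.

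For $M(r)$ I use the two trivial estimates: $M(r)\le 1$, because $p$ is a probability density, and $M(r)\le C\,|B_r|=C\,|\mathcal{S}^{d-1}|\,r^{d}/d$, because $p\le C$. The two bounds cross at
\[
r_0 := \Bigl(\tfrac{d}{C\,|\mathcal{S}^{d-1}|}\Bigr)^{1/d},
\]
the natural scale at which a density of height $C$ first accommodates unit mass. Using the volume bound for $r\le r_0$ and the probability bound for $r>r_0$ produces a sum of two elementary one-dimensional integrals in $r$.

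Evaluating the two pieces is a one-line calculation. Using the defining identity $r_0^{\,d}=d/(C\,|\mathcal{S}^{d-1}|)$ to simplify, one sees that both contributions are proportional to $r_0^{-a}=\bigl(C\,|\mathcal{S}^{d-1}|/d\bigr)^{a/d}$ and that their sum is exactly $\frac{d}{d-a}\bigl(C\,|\mathcal{S}^{d-1}|/d\bigr)^{a/d}$, which is \eqref{galphabound}. There is no real obstacle: the lemma is essentially an optimization identity. The only point worth noting is that the same $r_0$ would be singled out by the bathtub principle — the extremal $p$ is $C\,\mathbf{1}_{B_{r_0}}$ — so the constant in \eqref{galphabound} is in fact sharp.
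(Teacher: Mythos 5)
Your argument is correct, and it takes a genuinely different route from the paper. The paper simply invokes the bathtub principle (citing Lieb--Loss): among probability densities bounded by $C$, the functional $\int p(y)|y|^{-a}\,dy$ is maximized by $p^* = C\,\mathbf{1}_{\{|y|\le R\}}$ with $R=\bigl(d/(C|\mathcal S^{d-1}|)\bigr)^{1/d}$, and the bound then follows from a one-line radial computation. You instead prove the estimate from scratch via the layer-cake formula $|y|^{-a}=a\int_{|y|}^{\infty}r^{-a-1}\,dr$, Tonelli, the two trivial bounds $M(r)\le\min\{1,\,C|\mathcal S^{d-1}|r^{d}/d\}$, and a split at the cross-over radius $r_0$; the two resulting integrals sum to $\bigl(\tfrac{a}{d-a}+1\bigr)r_0^{-a}=\tfrac{d}{d-a}\bigl(C|\mathcal S^{d-1}|/d\bigr)^{a/d}$, exactly the stated constant, with $a<d$ ensuring convergence at $r=0$ (and the case $a=0$ being trivial, since your layer-cake identity needs $a>0$). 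Your approach is self-contained and elementary, essentially re-proving the relevant special case of the bathtub principle rather than quoting it, and your closing remark correctly identifies the extremizer $C\,\mathbf{1}_{B_{r_0}}$, so you also see that the constant is sharp; what the paper's version buys is brevity, at the cost of an external reference.
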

\begin{proof}
The bathtub principle (see, e.g., \cite{liebloss}) states that the maximum of 
the expression $\int_{\mathbb R^d} \frac{p(y)}{|y|^\alpha} dy$ over all 
probability distributions $p$ with $p(y) \le C$, is attained at  
\[
p^*(y) = 
\begin{cases}
  C \ {\rm if} \ |y| \le R \\ 0 \ {\rm if } \ |y| > R \ ,
\end{cases}
\]
with
\[
R =  \left( \frac{d}{C |\mathcal S^{d-1}|}\right)^{\frac{1}{d}} \ .
\]
The result follows from a straightforward computation.  
\end{proof}
The following serves to demonstrate our simple method with the least amount of fuss.
\begin{lem} \label{example}
Let $a$ be a positive constant and $p$ be a probability distribution bounded by 
$C$ with finite second and fourth moment. Set $m:=\int_{\mathbb R^d} p(y) |y|^2 
dy$. Then
\begin{equation} \label{mennlimit}
\lim_{N \to \infty} \int_{\mathbb R^{Nd}} \left(\frac{\sqrt N}{|\bW|}\right)^a 
\prod_{j=1}^N p(w_i) dw_i =  \left(\frac{1}{\sqrt m}\right)^a \ . 
\end{equation}
\end{lem}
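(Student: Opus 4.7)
The plan is to interpret $\int_{\mathbb R^{dN}} (\sqrt N/|\bW|)^a \prod_{j=1}^N p(w_j)\,dw_j$ as an expectation of the function $(\sqrt N/|\bW|)^a$ when $w_1,\dots,w_N$ are i.i.d.\ with density $p$, apply the law of large numbers to $|\bW|^2 = \sum_{j=1}^N |w_j|^2$, and justify passing the limit inside via uniform integrability. Since $p$ has finite second moment, the variables $|w_j|^2$ are i.i.d.\ with mean $m$, so the weak law of large numbers gives $|\bW|^2/N \to m$ in probability; consequently $(\sqrt N/|\bW|)^a \to m^{-a/2}$ in probability.

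The obstacle is the singularity of the integrand at $\bW = 0$, which defeats a direct dominated-convergence argument, so I would establish uniform integrability by bounding a slightly higher moment. Fix a small $\epsilon > 0$. For $N$ so large that $a(1+\epsilon) < dN$, apply Lemma \ref{pbound} in dimension $dN$ to the product density $\prod_j p(w_j)$, which is bounded by $C^N$, with exponent $a(1+\epsilon)$, obtaining
\[
\int_{\mathbb R^{dN}} \frac{\prod_{j=1}^N p(w_j)}{|\bW|^{a(1+\epsilon)}}\,d\bW \;\le\; \frac{dN}{dN - a(1+\epsilon)}\left(\frac{C^N\, |\cS^{dN-1}|}{dN}\right)^{\!a(1+\epsilon)/(dN)}.
\]
Stirling's formula gives $|\cS^{dN-1}|^{1/(dN)} \sim \sqrt{2\pi e/(dN)}$, while $(C^N)^{a(1+\epsilon)/(dN)} = C^{a(1+\epsilon)/d}$ contributes only a dimension-independent constant, so the right-hand side is $O(N^{-a(1+\epsilon)/2})$. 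Multiplying by $N^{a(1+\epsilon)/2}$ then yields the uniform moment bound
\[
\sup_N \int_{\mathbb R^{dN}} \left(\frac{\sqrt N}{|\bW|}\right)^{\!a(1+\epsilon)} \prod_{j=1}^N p(w_j)\,dw_j \;<\; \infty,
\]
which implies uniform integrability of the family $\{(\sqrt N/|\bW|)^a\}_N$. Convergence in probability combined with uniform integrability (Vitali's theorem) then delivers convergence of the expectations to the almost sure limit, giving the claimed value $m^{-a/2} = (1/\sqrt m)^a$.

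The main delicate point is the dimensional bookkeeping in the uniform-moment step: one must verify that the $C^N$ density bound does not spoil the estimate, and that Stirling's asymptotics for $|\cS^{dN-1}|$ produce exactly the $(dN)^{-a(1+\epsilon)/2}$ factor needed to compensate the $N^{a(1+\epsilon)/2}$ multiplier. Once this is done, the rest is standard: the LLN delivers the pointwise behaviour and Vitali converts it to $L^1$ convergence. The finite fourth moment does not appear to be needed here; it presumably enters later via Chebyshev when the same circle of ideas is applied to more elaborate integrands involving the $k$-particle marginal \eqref{formulakay}.
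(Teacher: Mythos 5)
Your argument is correct, but it is genuinely different from the paper's. The paper splits $\mathbb{R}^{dN}$ into the set $A_\varepsilon=\{|\,|\bW|^2/N-m|\le\varepsilon\}$, where the integrand is sandwiched between $(m\pm\varepsilon)^{-a/2}$ and Chebyshev (this is where the fourth moment enters) controls $\mathbb{P}(A_\varepsilon^c)$, and then handles the complement by the arithmetic--geometric mean inequality $\sqrt N/|\bW|\le\prod_j|w_j|^{-1/N}$, which factorizes the singular weight into a tilted product measure $\gamma_N$; Lemma \ref{pbound} is applied \emph{per particle}, in dimension $d$ with exponent $a/N$, and a second Chebyshev estimate for $\gamma_N$ makes the tail contribution $O(1/N)$. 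You instead apply Lemma \ref{pbound} \emph{once, in dimension $dN$}, to the product density bounded by $C^N$ with exponent $a(1+\epsilon)$; your dimensional bookkeeping is right: the bathtub maximizer has radius of order $\sqrt{N}$, so Stirling gives $\bigl(|\cS^{dN-1}|/(dN)\bigr)^{a(1+\epsilon)/(dN)}\sim\bigl(2\pi e/(dN)\bigr)^{a(1+\epsilon)/2}$, which exactly compensates the factor $N^{a(1+\epsilon)/2}$, while $(C^N)^{a(1+\epsilon)/(dN)}=C^{a(1+\epsilon)/d}$ stays harmless. This yields a uniform $(1+\epsilon)$-moment bound, hence uniform integrability, and the weak law of large numbers plus Vitali finishes the proof; you are also right that finite second moment suffices here and the fourth moment is not needed. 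What each approach buys: yours is softer, shorter, and requires weaker moment hypotheses; the paper's decomposition is quantitative (error $o(\varepsilon)+O(\varepsilon^{-2}N^{-1})$, with $\varepsilon$ later chosen as a negative power of $N$), and that explicit rate, together with the AM--GM/tilting device, is reused essentially verbatim in Lemma \ref{limitnorm} and again in the entropic chaoticity corollary, where the integrand carries an extra logarithmic factor that the same splitting absorbs. One cosmetic remark: the convergence you invoke is in probability (the underlying product spaces change with $N$), which together with uniform integrability is all Vitali needs; the phrase ``almost sure limit'' should be read in that sense, or one should pass to the infinite product space.
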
 
\begin{proof} We denote  
\[
\mathbb{P}(A):=\int_A \prod_{j=1}^N p(\bw_j) d\bW  \ .
\]
Define the set
\[
A_\varepsilon :=\left\{ \bW \in \mathbb{R}^{Nd} : \left| \frac{|\bW|^2}{N} -
m\right| \le \varepsilon \right\}\ ,
\]
so that
\[
 \left(\frac{1}{\sqrt{ m +\varepsilon} }\right)^a \mathbb{P}(A_\varepsilon) <  
\int_{A_\varepsilon} \left(\frac{\sqrt N}{|\bW|}\right)^a \prod_{j=1}^N p(w_i) 
dw_i
 <  \left(\frac{1}{\sqrt {m - \varepsilon}}\right)^a\mathbb{P}(A_\varepsilon) \ .
\]
Chebyshev's inequality states that
\begin{equation} \label{tcheb}
\mathbb{P}(A^c_\varepsilon) \le \frac{s^2}{\varepsilon^2N} \ ,
\end{equation}
where
\[
s^2 := \int_{\mathbb R^d} p(y) (|y|^2-m^2)^2 dy\ , 
\]
so that
\[
 \left(\frac{1}{\sqrt{ m +\varepsilon} }\right)^a \left(1- 
\frac{s^2}{\varepsilon^2 N}\right) <  \int_{A_\varepsilon} \left(\frac{\sqrt 
N}{|\bW|}\right)^a \prod_{j=1}^N p(w_i) dw_i
 <  \left(\frac{1}{\sqrt {m - \varepsilon}}\right)^a\left(1+ 
\frac{s^2}{\varepsilon^2 N}\right) \ .
\]
It remains to estimate
\[
 \int_{A^c_\varepsilon} \left(\frac{\sqrt N}{|\bW|}\right)^a \prod_{j=1}^N 
p(w_i) dw_i  \ .
\]
By the inequality between the arithmetic and geometric mean
\[
\frac{\sqrt N}{|\bW|} \le \prod_{j=1}^N |w_j|^{-\frac{1}{N}}
\]
and hence
\begin{equation}\label{esti}
\int_{A^c_\varepsilon} \left(\frac{\sqrt N}{|\bW|}\right)^a \prod_{j=1}^N 
p(w_i) dw_i \le
\int_{A^c_\varepsilon}  \prod_{j=1}^N p(w_i) |w_i|^{-\frac{a}{N}}  dw_i =  
\int_{A^c_\varepsilon}  \prod_{j=1}^N \gamma_N(w_i) dw_i
\left(\int_{\mathbb R^d} p(w) |w|^{-\frac{a}{N}}  dw\right)^N
\end{equation}
where
\[
\gamma_N(w) := \frac{ p(w) |w|^{-\frac{a}{N}}}{\int_{\mathbb R^d} p(w) |w|^{-\frac{a}{N}}  dw} 
\]
is a probability measure. It is easy to see that 
\[
m_N := \int_{\mathbb R^d} \gamma_N(w)|w|^2 dw
\]
converges to $m$ as $ N \to \infty$ and hence $m-\varepsilon/2 \le m_N \le m 
+\varepsilon/2$ for all $N$ large enough. Likewise, the fourth moment
\[
s_N^2 : = \int_{\mathbb R^d} \gamma_N(w) (|w|^2 - m^2_N)^2 dw
\]
converges to $\sigma^2$ as $N \to \infty$. Thus we have that the set
\[
B_\varepsilon := \left\{ \bW \in \mathbb{R}^{Nd} : \left| \frac{|\bW|^2}{N} -
m_N \right| \le \frac{\varepsilon}{2} \right\} \subset A_\varepsilon
\]
and hence $A^c_\varepsilon \subset B^c_\varepsilon$ for all $N$ sufficiently large.
Applying Chebyshev's inequality \eqref{tcheb} to the measure $\prod_{j=1}^N 
\gamma(v_j) dv_j$ we find that
\[
 \int_{A^c_\varepsilon}  \prod_{j=1}^N \gamma_N(w_i) dw_i \le  
\int_{B^c_\varepsilon}  \prod_{j=1}^N \gamma_N(w_i) dw_i
 \le \frac{4s_N^2}{\varepsilon^2N}\ .
 \]
Finally, using Lemma \ref{pbound} with $a$ replaced by $\frac{a}{N}$, we get
 \[
 \int_{A^c_\varepsilon} \left(\frac{\sqrt N}{|\bW|}\right)^a \prod_{j=1}^N 
p(w_i) dw_i 
 \le  \frac{4s_N^2}{\varepsilon^2N}  \left(\frac{d}{d-\frac{a}{N}}\right)^N 
\left( \frac{C |\mathcal S^{d-1}|}{d}\right)^{\frac{a}{d}}
 \le  \frac{4s_N^2}{\varepsilon^2N}  \left( \frac{C e  |\mathcal 
S^{d-1}|}{d}\right)^{\frac{a}{d}}
 \]
which tends to zero as $N \to \infty$. Note that we have used the fact that $(1 
- \frac{c}{N})^N$ is monotonically decreasing in $N$. 
\end{proof}
\begin{cor} \label{korollar}
We have the following limit
\[
 \lim_{N \to \infty} \int_{\mathbb{R}^{Nd}} \frac{\sqrt N}{|\bW|} \prod_{j=1}^N
g(\bw_j) d\bW  = \frac{1}{\sqrt \mu} 
\]
so that
\[
\lim_{N \to \infty} \sqrt N \Gamma\left(\frac{dN-1}{2+\alpha}\right)Z_N
=\frac{(2+\alpha)}{\sqrt \mu}
\ . 
\]
\end{cor}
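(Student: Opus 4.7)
The plan is to read this corollary as an immediate specialization of Lemma \ref{example} to the particular probability density $g$, followed by algebraic substitution into formula \eqref{formulazee}.

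First I would verify that $g(\bw) = e^{-|\bw|^{2+\alpha}}/c$ meets the hypotheses of Lemma \ref{example}. It is a probability density by construction; it is bounded, with $\|g\|_\infty = 1/c$ achieved at the origin; its second moment is $\int_{\mathbb{R}^d} |\bw|^2 g(\bw)\,d\bw = \mu$ by the definition of $\mu$ in \eqref{constants}; and its centered fourth moment is the finite quantity $\sigma^2$ given by \eqref{sigma}. Hence Lemma \ref{example} applies with $p = g$, $a = 1$, $m = \mu$, and $C = 1/c$, and yields directly
\[
\lim_{N \to \infty} \int_{\mathbb{R}^{Nd}} \frac{\sqrt N}{|\bW|} \prod_{j=1}^N g(\bw_j)\,d\bW  = \frac{1}{\sqrt \mu}\,,
\]
which is the first displayed limit.

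For the second limit I would simply rearrange formula \eqref{formulazee}. Multiplying both sides of
\[
Z_N = \frac{(2+\alpha)}{\Gamma\!\left(\frac{dN-1}{2+\alpha}\right)} \int_{\mathbb{R}^{dN}} \frac{\prod_{i=1}^N g(\bw_i)}{|\bW|}\,d\bW
\]
by $\sqrt N\,\Gamma\!\left(\frac{dN-1}{2+\alpha}\right)$ gives
\[
\sqrt N\,\Gamma\!\left(\tfrac{dN-1}{2+\alpha}\right) Z_N = (2+\alpha)\int_{\mathbb{R}^{dN}} \frac{\sqrt N}{|\bW|}\prod_{i=1}^N g(\bw_i)\,d\bW\,,
\]
and the first limit then yields $(2+\alpha)/\sqrt{\mu}$ in the $N\to\infty$ limit.

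There is essentially no obstacle here; the work has already been done in Lemma \ref{example}. The only thing one has to be a little careful about is making sure the constants being invoked (the bound $\|g\|_\infty \le 1/c$, the second moment $\mu$, the finiteness of $\sigma^2$) are exactly the constants that Lemma \ref{example} requires, but all three are explicitly available from \eqref{constants} and \eqref{sigma}. No further estimates, no law-of-large-numbers argument beyond what Lemma \ref{example} already contains, and no separate handling of the small-$|\bW|$ region is needed at the corollary level.
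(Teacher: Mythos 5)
Your proposal is correct and matches the paper's intent exactly: the corollary is meant to follow immediately from Lemma \ref{example} applied with $p=g$, $a=1$, $m=\mu$ (all hypotheses being supplied by \eqref{constants} and \eqref{sigma}), combined with the identity \eqref{formulazee} for $Z_N$. No further argument is needed, and the paper itself gives no additional proof beyond this observation.
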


We now turn our attention to $f_N^{(k)}$. According to \eqref{formulakay} we 
have to compute 
\begin{eqnarray}
 \int d \bV_k \varphi(\bV_k) f^{(k)}_N(\bV_k) &=& \nonumber \\
&=& \frac{2+\alpha}{\Gamma(\frac{dN-1}{2 +\alpha})Z_N
}
 \int d \bV_k \varphi(\bV_k)\sqrt{\frac{N}{(N -
|\bV_k|^2)^{dk+1}}}  \nonumber \\
 & \times&\int_{\mathbb{R}^{d(N-k)}}    \prod_{j=1}^k g\left(\frac{\bv_j
|\bW^k|
}{\sqrt{N- |\bV_k|^2}}\right) |\bW^k|^{dk-1}
\prod_{j=k+1}^N g(\bw_j)  d\bW^k \ . \nonumber
\end{eqnarray}
The following change of variables will be helpful.
\begin{lem}\label{change}
Let $\bY=(\bY_k, \bY^k)$ be defined by
\[
\begin{cases}
 \bY_k=\frac{|\bW^k|}{\sqrt{N-|\bV_k|^2}} \bV_k&  \\
\bY^k=\bW^k& \ .
\end{cases}
\]
Then
\begin{equation}
\begin{cases} \label{inverse}
 \bV_k=\frac{\sqrt N}{|\bY|} \bY_k&  \\
\bW^k=\bY^k& \ ,
\end{cases}
\end{equation}
and the Jacobian determinant is given by
\[
 \left|\frac{\partial (\bV_k, \bW^k )}{\partial
(\bY)}\right|= \left(\frac{\sqrt N}{|\bY|} \right)^{dk} \left(\frac{ |\bY^k|^2}{|\bY|^2}\right)
 \ .
\]
\end{lem}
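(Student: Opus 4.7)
The plan is two-step: first verify the inversion formula \eqref{inverse} by a direct calculation of $|\bY|$, and then compute the Jacobian determinant by exploiting the block structure of the transformation.

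For the inversion, I would start by computing
\[
|\bY|^2 = |\bY_k|^2 + |\bY^k|^2 = \frac{|\bW^k|^2\,|\bV_k|^2}{N-|\bV_k|^2} + |\bW^k|^2 = \frac{N\,|\bW^k|^2}{N-|\bV_k|^2},
\]
so that $\sqrt N/|\bY| = \sqrt{N-|\bV_k|^2}/|\bW^k|$. Substituting into the defining relation $\bY_k = (|\bW^k|/\sqrt{N-|\bV_k|^2})\,\bV_k$ immediately gives $\bV_k = (\sqrt N/|\bY|)\,\bY_k$, which is \eqref{inverse}.

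For the Jacobian, since $\bW^k = \bY^k$ is the identity in the last $d(N-k)$ coordinates, the full Jacobian matrix is block triangular and the determinant reduces to $\det(\partial\bV_k/\partial\bY_k)$ at fixed $\bY^k$. Setting $A := |\bY^k|$ and $\phi(r) := \sqrt N/\sqrt{r^2+A^2}$, the reduced map is $\bY_k \mapsto \phi(|\bY_k|)\,\bY_k$ on $\mathbb R^{dk}$. I would then invoke the standard rank-one identity
\[
\det\!\bigl(\phi(r)\, I + (\phi'(r)/r)\, y\,y^T\bigr) \;=\; \bigl(\phi(r) + r\phi'(r)\bigr)\,\phi(r)^{dk-1},
\]
whose proof is the observation that $y$ is an eigenvector with eigenvalue $\phi(r)+r\phi'(r) = (d/dr)(r\phi(r))$, while the orthogonal complement (of dimension $dk-1$) is the $\phi(r)$-eigenspace. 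A short computation gives $\phi+r\phi' = \sqrt N\,A^2/(r^2+A^2)^{3/2}$, and upon substituting $r^2+A^2 = |\bY|^2$ and $A=|\bY^k|$ the product collapses to $(\sqrt N/|\bY|)^{dk}\,(|\bY^k|^2/|\bY|^2)$, as claimed.

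There is really no serious obstacle; the one spot requiring a little care is the determinant of $\phi(r) I + (\phi'(r)/r)\,y y^T$, where one must correctly pair the single radial eigenvalue against the $(dk-1)$-fold transverse eigenvalue $\phi(r)$. The rest is algebra, and the identity $|\bY|^2 = N|\bW^k|^2/(N-|\bV_k|^2)$ from Step 1 is precisely the bookkeeping that makes the rational powers of $|\bY|$ line up with the form stated in the lemma.
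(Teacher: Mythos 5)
Your proof is correct and follows essentially the same route as the paper: inversion by direct computation of the norms, block-triangular reduction of the Jacobian, and a spectral evaluation of the diagonal block in which $\bY_k$ carries the simple radial eigenvalue $\phi+r\phi'=\sqrt N\,|\bY^k|^2/|\bY|^3$ and its orthogonal complement the $(dk-1)$-fold eigenvalue $\sqrt N/|\bY|$. The only cosmetic difference is that you phrase the block as the general radial-map formula $\phi(r)I+(\phi'(r)/r)\,\bY_k\bY_k^T$, while the paper writes it explicitly as $\frac{\sqrt N}{|\bY|}\bigl(I_{dk}-\frac{\bY_k\otimes\bY_k}{|\bY|^2}\bigr)$; the eigenvalue bookkeeping is identical.
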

\begin{proof} We have
\[
|\bY_k|^2=\frac{|\bW^k|^2|\bV_k|^2}{N-|\bV_k|^2} = \frac{|\bY^k|^2|\bV_k|^2}{N-|\bV_k|^2}  \ ,
\]
so that
\[
 |\bV_k|^2=\frac{N|\bY^k|^2}{|\bY|^2}
\]
from which \eqref{inverse} follows. 
The Jacobi matrix is of the form
\[
\left[ \begin{array}{cc} A & B \\ 0 & I_{d(N-k)} \end{array}\right]
\]
where $I_n$ is the $n \times n$ identity matrix and the matrices $A,B$ 
are given by
\begin{eqnarray*}
A &=&  \frac{\partial \bV_k}{\partial \bY_k} = 
 \frac{\sqrt N}{|\bY|}
\left(I_{dk} - \frac{\bY_k\otimes \bY_k}{|\bY|^2}\right) \ ,\\
B &=&   \frac{\partial \bV_k}{\partial \bY^k} = 
- \frac{\sqrt N}{|\bY|} \frac{\bY_k \otimes \bY^k}{|\bY|^2}   \  .
\end{eqnarray*}
Note that  $A$ is a $dk 
\times dk$ matrix and $B$ is a $dk \times d(N-k)$ matrix. Hence, the determinant 
 of the Jacobian is given
by $\det A \cdot \det I_{d(N-k)} =  \det A$.  Because
\[
A\bY_k = \frac{\sqrt N |\bY^k|^2}{|\bY|^3}  \bY_k \ ,
\]
$\frac{\sqrt N |\bY^k|^2}{|\bY|^3}$ is a simple eigenvalue and
$ \frac{\sqrt N}{|\bY|}$ is a $(dk-1)$-fold eigenvalue of $A$. We thus find that
\[
\det A =  \left(\frac{\sqrt N}{|\bY|} \right)^{dk} \frac{ |\bY^k|^2}{|\bY|^2}  \ .
\] 
\end{proof}
Let $\varphi(\bV_k)$ be a continuous function on
$\mathbb{R}^{dk}$ such that
\[
 \sup_{\bV_k\in\mathbb{R}^{dk}}\varphi(\bV_k)<K\ .
\]
With the change of variables of Lemma \ref{change} we get 
\begin{eqnarray}\label{inte}
\int d\bV_k\varphi(\bV_k)
f_N^{(k)}(\bV_k)&=&\frac{2+\alpha}{\Gamma(\frac{dN-1}{2
+\alpha})Z_N } \int_{\mathbb{R}^{dk}} d\bY_k\prod_{i=1}^kg(\by_i) \nonumber \\
&\times& \int_{\mathbb{R}^{d(N-k)}}d\bY^k
\frac{\prod_{j=k+1}^N g(\by_j)}{|\bY|}  
\varphi\left(\by_1\frac{\sqrt{N}}{|\bY|} , \ldots
, \by_k\frac {
\sqrt{N}}{|\bY|}\right) \nonumber\\
&=:&\int_{\mathbb{R}^{dk}} d\by_1\cdots d\by_k\prod_{i=1}^kg(\by_i)
H_N(\by_1,\ldots,\by_k) \ .
\end{eqnarray}
with
\begin{equation}
H_N(\bY_k) = \frac{2+\alpha}{\Gamma(\frac{dN-1}{2
+\alpha})Z_N \sqrt N }  \int_{\mathbb{R}^{d(N-k)}}d\bY^k  
\varphi\left(\frac{\sqrt{N}}{|\bY|} \bY_k \right) \prod_{j=k+1}^N g(\by_j)\frac{\sqrt N}{|\bY|} \nonumber \ .
\end{equation}
\begin{lem}  \label{limitnorm}
The function $H_N(\bY_k)$
is bounded on $\mathbb R^{kd}$, in fact
\begin{eqnarray}\label{uniformbound}
|H_N(\bY_k)| &\le& K \left( \frac{N}{N-k}\right)^{\frac{dk+2}{2}} \left( 
\frac{d}{d-\frac{dk+2}{N-k}}\right)^{N-k}\left( \frac{\Vert g \Vert_\infty 
|\mathcal S^{d-1}|}{d}\right)^{\frac{dk+2}{d}} \nonumber \\
&\le& K\left( \frac{N}{N-k}\right)^{\frac{dk+2}{2}} \left( \frac{ e \Vert g 
\Vert_\infty |\mathcal S^{d-1}|}{d}\right)^{\frac{dk+2}{d}} \ ,
\end{eqnarray}
which is bounded uniformly in $N$ for $N> k+1$. Moreover,
\begin{equation} \label{ennlimit}
\lim_{N \to \infty} H_N(\bY_k) =  \left(\frac{1}{\sqrt \mu}\right)^{dk} 
\varphi\left( \frac{\bY_k}{\sqrt \mu}\right) \ .
\end{equation}
\end{lem}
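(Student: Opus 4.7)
My plan is to view $H_N(\bY_k)$ (after extracting the prefactor controlled by Corollary \ref{korollar}) as the expectation of $\varphi\bigl(\tfrac{\sqrt N}{|\bY|}\bY_k\bigr)\cdot\tfrac{\sqrt N}{|\bY|}$ against the product probability $\prod_{j=k+1}^N g(\by_j)\,d\by_j$, and to exploit the law of large numbers to pin $|\bY^k|^2/(N-k)$ near $\mu$, exactly as in Lemma \ref{example}.

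\emph{Uniform bound \eqref{uniformbound}.} Using $|\varphi|\le K$ and $|\bY|\ge|\bY^k|$, the task reduces to estimating $\int\prod_{j=k+1}^N g(\by_j)\,(\sqrt N/|\bY^k|)^{dk+2}\,d\bY^k$ uniformly in $\bY_k$. The AM--GM step already used in Lemma \ref{example} yields
\[
\frac{\sqrt{N-k}}{|\bY^k|}\le\prod_{j=k+1}^N|\by_j|^{-1/(N-k)},\qquad\text{hence}\qquad\Bigl(\frac{\sqrt N}{|\bY^k|}\Bigr)^{dk+2}\le\Bigl(\frac{N}{N-k}\Bigr)^{(dk+2)/2}\prod_{j=k+1}^N|\by_j|^{-(dk+2)/(N-k)}.
\]
Applying Lemma \ref{pbound} with $a=(dk+2)/(N-k)<d$ (valid once $N>k+1+(2+dk)/d$) to each one-dimensional factor delivers \eqref{uniformbound}, and the prefactor $\tfrac{2+\alpha}{\sqrt N\,\Gamma(\tfrac{dN-1}{2+\alpha})Z_N}$ is bounded (in fact convergent to $\sqrt\mu$) by Corollary \ref{korollar}.

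\emph{Pointwise limit \eqref{ennlimit}.} For fixed $\bY_k$, split $\bY^k\in\mathbb R^{d(N-k)}$ into the ``typical'' set
\[
A_\varepsilon:=\Bigl\{\,\bigl|\tfrac{|\bY^k|^2}{N-k}-\mu\bigr|\le\varepsilon\,\Bigr\}
\]
and its complement. On $A_\varepsilon$, since $|\bY_k|$ is fixed, $|\bY|^2/N=(|\bY_k|^2+|\bY^k|^2)/N$ converges to $\mu$ uniformly, so by continuity of $\varphi$ one has $\varphi(\sqrt N\bY_k/|\bY|)\to\varphi(\bY_k/\sqrt\mu)$ and $\sqrt N/|\bY|\to 1/\sqrt\mu$; Chebyshev's inequality \eqref{tcheb} with variance $\sigma^2$ from \eqref{sigma} gives $\int_{A_\varepsilon}\prod g(\by_j)d\bY^k\to 1$, and combined with the prefactor limit from Corollary \ref{korollar} this identifies the limit. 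On $A_\varepsilon^c$ I rerun the argument of \eqref{esti}: after tilting the product measure to $\gamma_N(w)\propto g(w)|w|^{-(dk+2)/(N-k)}$, whose second and fourth moments still converge to $\mu$ and $\sigma^2$ respectively, a second Chebyshev bounds the tilted probability of $A_\varepsilon^c$ by $O(1/N)$, so this piece vanishes uniformly in $\bY_k$. Sending $\varepsilon\to 0$ after $N\to\infty$ completes the proof.

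The real difficulty, as in Lemma \ref{example}, is the singularity of $\sqrt N/|\bY|$ at $\bY^k=0$, which obstructs a direct dominated-convergence argument. The AM--GM trick is what rescues matters: it smears the singularity uniformly across the $N-k$ coordinates $\by_j$, producing one-dimensional integrals with exponent $O(1/N)$ that fall safely inside the regime of Lemma \ref{pbound}. The only additional subtlety over Lemma \ref{example} is that the $\bY_k$-dependence of $|\bY|$ must be shown to be negligible, which it is because $|\bY_k|^2/N\to 0$ and the bound \eqref{uniformbound} is $\bY_k$-independent.
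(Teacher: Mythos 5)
Your route is essentially the paper's: extract the prefactor $\frac{2+\alpha}{\Gamma(\frac{dN-1}{2+\alpha})Z_N\sqrt N}$ and control it with Corollary \ref{korollar}, use $|\bY|\ge|\bY^k|$ to kill the $\bY_k$-dependence, and then run the Lemma \ref{example} law-of-large-numbers scheme (Chebyshev on $A_\varepsilon$, AM--GM plus the tilted measure and Lemma \ref{pbound} on $A_\varepsilon^c$). The one genuine flaw is the exponent $dk+2$. In the representation \eqref{inte}, which you yourself adopt in your opening sentence, the weight multiplying $\varphi$ is $\sqrt N/|\bY|$ to the \emph{first} power, so $|\varphi|\le K$ and $|\bY|\ge|\bY^k|$ reduce the uniform bound to $\int\prod_{j>k}g(\by_j)\,\frac{\sqrt N}{|\bY^k|}\,d\bY^k$, not to $\int\prod_{j>k}g(\by_j)\bigl(\sqrt N/|\bY^k|\bigr)^{dk+2}d\bY^k$; the first power is not pointwise dominated by the $(dk+2)$-th power (this fails wherever $|\bY^k|>\sqrt N$), and nothing in $H_N$ generates that power, so the stated reduction is a non sequitur. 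The same slip propagates to the complement estimate, where the tilt should be $\gamma_N(w)\propto g(w)|w|^{-1/(N-k)}$ rather than $g(w)|w|^{-(dk+2)/(N-k)}$. With the exponent corrected to $1$ (i.e.\ $a=1/(N-k)$ in Lemma \ref{pbound}) your argument goes through verbatim and gives uniform boundedness and the vanishing of the $A_\varepsilon^c$ piece, though not literally the constant displayed in \eqref{uniformbound}. Do not be troubled by that: the paper's own proof does not produce that constant either; it bounds $|H_N|\le K\,\Gamma\bigl(\tfrac{d(N-k)-1}{2+\alpha}\bigr)Z_{N-k}\big/\bigl(\Gamma(\tfrac{dN-1}{2+\alpha})Z_N\bigr)$ by recognizing the $\bY^k$-integral via \eqref{formulazee} at $N-k$ and then invokes Corollary \ref{korollar} — a slicker device than AM--GM for the boundedness part, while your AM--GM route is the honest way to get an explicit constant.

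Two further points. State the limit your computation actually yields: prefactor $\to\sqrt\mu$ and inner integral $\to\mu^{-1/2}\varphi(\bY_k/\sqrt\mu)$ on $A_\varepsilon$, hence $\lim_{N\to\infty}H_N(\bY_k)=\varphi(\bY_k/\sqrt\mu)$. This is what the paper's proof produces and what is used in the proof of Theorem \ref{chaoticss}; the factor $(1/\sqrt\mu)^{dk}$ in \eqref{ennlimit} appears to be a misprint (with it, $\int\prod_{i\le k}g(\by_i)\lim_N H_N\,d\bY_k$ would not equal $1$ for $\varphi\equiv1$), so "this identifies the limit" should be replaced by the explicit value, with a remark on the discrepancy. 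Finally, note that your constraint for Lemma \ref{pbound}, $a=(dk+2)/(N-k)<d$, forces $N>2k+2/d$, which is another sign the exponent is off: with the correct $a=1/(N-k)$ the argument works for all $N>k+1$, as the lemma asserts.
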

\begin{proof}  Because
\[
\frac{1}{|\bY|} \le \frac{1}{|\bY^k|} 
\]
we get
\[
|H_N(\bY_k)| \le K \frac{2+\alpha}{\Gamma(\frac{dN-1}{2
+\alpha})Z_N }  \int_{\mathbb{R}^{d(N-k)}}d\bY^k  
 \prod_{j=k+1}^N g(\by_j)\frac{1}{|\bY^k|} \le K \frac{Z_{N-k} \Gamma\left( \frac{d(N-k) -1}{2+\alpha}\right)}{Z_N\Gamma(\frac{dN-1}{2
+\alpha})}\ ,
\]
which is bounded uniformly in $N$, in fact the limit as $N \to \infty$ of the 
last expression is $K$. The proof of \eqref{ennlimit} follows, again with a 
slight modification, from the law of large numbers. First, by Corollary 
\ref{korollar} we have
\[
\lim_{N \to \infty}  \frac{2+\alpha}{\Gamma(\frac{dN-1}{2
+\alpha})Z_N \sqrt N }  = \sqrt \mu \ .
\]
We denote  
\[
\mathbb{P}_k(A):=\int_A \prod_{j=k+1}^N g(\bw_j) d\bY^k \ .
\]
Pick $\varepsilon$ small and set 
\[
A_\varepsilon =\left\{ \bY^k \in \mathbb{R}^{(N-k)d} : \left| \frac{|\bY^k|^2}{N-k} -
\mu\right| \le \varepsilon \right\}\ .
\]
Observe that for $\bY^k \in A_\varepsilon$
\[
\sqrt{\frac{N}{N-k}} \frac{1}{\sqrt{\mu+\varepsilon + \frac{|\bY_k|^2}{N-k}}} \le \frac{\sqrt N}{|\bY|} \le
\sqrt{\frac{N}{N-k}} \frac{1}{\sqrt{\mu-\varepsilon + \frac{|\bY_k|^2}{N-k}}}
\]
which, because $\varphi$ is continuous, implies that for $\bY_k$ fixed,
\[
\left| 
\left(\frac{\sqrt N}{|\bY|} \right)\varphi\left(\frac{\sqrt N}{|\bY|} 
\bY_k\right) - \mu^{-\frac{1}{2}}\varphi \left(\frac{\bY_k}{\sqrt 
\mu}\right)\right| = o(\varepsilon)
\]
{\it uniformly} in $\bY^k \in A_\varepsilon$ for $N$ sufficiently large. 
Needless to say this estimate is {\it not} uniform in $\bY_k$, which, however, 
is immaterial for our considerations. From this it follows readily that
\[
\left|\int_{A_\varepsilon}
\frac{\sqrt N}{|\bY|} \varphi\left(\frac{\sqrt N}{|\bY|} 
\bY_k\right)\prod_{j=k+1}^N g(\bw_j) d\bY^k  -\mu^{-\frac{1}{2}} 
\varphi\left(\frac{\bY_k}{\sqrt 
\mu}\right) \mathbb{P}_k(A_\varepsilon)\right| = o(\varepsilon)  \ . 
\]
Using Chebyshev estimate we get
\[
\mathbb{P}_k(A^c_\varepsilon) \le \frac{\sigma^2}{\varepsilon^2 (N-k)} \ ,
\]
so that
\begin{equation} \label{aepsilonestimate}
\left|\int_{A_\varepsilon} 
\frac{\sqrt N}{|\bY|}  \varphi\left( \bY_k  \frac{ \sqrt 
N}{|\bY|}\right)\prod_{j=k+1}^N g(\bw_j) d\bY^k  -\mu^{-\frac{1}{2}} 
\varphi\left(\frac{\bY_k}{\sqrt 
\mu}\right) \right| \le o(\varepsilon) +  \mu^{-\frac{1}{2}} K 
\frac{\sigma^2}{\varepsilon^2 (N-k)} \ .
\end{equation}
It remains to estimate
\begin{eqnarray}
\left | \int_{A^c_\varepsilon} 
\frac{\sqrt N}{|\bY|}  \varphi\left(\bY_k  \frac{ \sqrt 
N}{|\bY|}\right) \prod_{j=k+1}^N g(\bw_j) d\bY^k \right| &\le&  K  
\int_{A^c_\varepsilon}\frac{\sqrt N}{|\bY|}\prod_{j=k+1}^N g(\bw_j) d\bY^k 
\nonumber \\
& \le & K \left(\frac{N}{N-k}\right)^{\frac{1}{2}}  \int_{A^c_\varepsilon} 
\left(\frac{\sqrt{N-k}}{|\bY^k|}\right)\prod_{j=k+1}^N g(\bw_j) d\bY^k \ 
.\nonumber
\end{eqnarray}
Using the same argument as in the proof of Lemma \ref{example} yields the estimate
\[
 \int_{A^c_\varepsilon} \left(\frac{\sqrt {N-k}}{|\bY^k|}\right) \prod_{j=1}^N 
g(y_i) dy_i 
\le  \frac{4\sigma_{N-k}^2}{\varepsilon^2(N-k)}  \left( \frac{C e  |\mathcal S^{d-1}|}{d}\right)^{\frac{1}{d}} \ .
\]
Choosing $\varepsilon = (N-k)^{-\frac{1}{8}}$ and letting $N \to \infty$ proves 
the lemma.
\end{proof}
\begin{proof}[Proof of Theorem \ref{chaoticss}]
\[
\int d \bV_k \varphi(\bV_k) f^{(k)}_N(\bV_k) = \int_{\mathbb{R}^{dk}} 
d\by_1\cdots d\by_k\prod_{i=1}^kg(\by_i)
H_N(\by_1,\ldots,\by_k) \ .
\]
By Lemma \ref{limitnorm} $H_N(\bY_k)$ is bounded uniformly in $N$ and converges 
pointwise to $ \varphi(\bY_k/\sqrt \mu)$ and hence
\[
\lim_{N \to \infty} \int d \bV_k \varphi(\bV_k) f^{(k)}_N(\bV_k) =  \int_{\mathbb{R}^{dk}} d\by_1\cdots d\by_k\prod_{i=1}^kg(\by_i)
\varphi\left(\frac{\bY_k}{\sqrt \mu}\right) \ ,
\]
by the Dominated Convergence Theorem. The last term equals
\[
\int_{\mathbb{R}^{dk}} d\by_1\cdots d\by_k\prod_{i=1}^kf(\by_i) \varphi(\bY_k)
\]
with $f$ given by \eqref{exin}. Note that  $f$ is a probability distribution and 
$\mu$, defined by \eqref{constants}, yields that $\int_{\mathbb R^d} |y|^2 
f(y)dy =1$. This proves the theorem.
\end{proof}

\section{Extension and Remarks.}

It is easy to extend the results of the previous section in a couple of 
interesting directions. We first observe that one can give a stronger definition 
of chaoticity by requiring that given a sequence of normalized functions 
$L_N(\bV)$ on $\cS^{dN-1}(\sqrt{N})$, the entropy per particle of this sequence 
converges to the entropy of the one particle marginal. More precisely if
\[
 S_N=\int_{\cS^{dN-1}(\sqrt{N})}L_N(\bV)\log
L_N(\bV)d\sigma^{dN-1}(\bV)
\]
is the entropy of the $N$ particles system then
\[
 \lim_{N\to \infty}\frac{S_N}{N}=\int_{\mathbb{R}^d} l(\bv)\log
l(\bv)d\bv
\]
where, as before
\[
 l(\bv)=\lim_{N\to\infty} l^{(1)}_N(\bv).
\]
If this is true we say that the sequence $H_N$ is {\bf entropically chaotic},
see \cite{CCLRV}.

\begin{cor}
The sequence $F_N(\bV)$ defined by \eqref{exN} is entropically chaotic and
\begin{equation}\label{SSN}
\lim_{N\to \infty} N^{-1}\int_{\cS^{dN-1}(\sqrt{N})}F_N(\bV)\log
F_N(\bV)d\sigma^{dN-1}(\bV)=\log\left(\frac{\mu^{\frac{d}{2}}}{c}\right)-\frac{d
} {
2+\alpha}=\int_{\mathbb{R}^d} f(\bv)\log f(\bv)d\bv
\end{equation}
where $\mu$ and $c$ are defined in \eqref{constants}.
\end{cor}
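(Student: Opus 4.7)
The entropy $S_N := \int F_N \log F_N\, d\sigma$ splits, via $\log F_N = -\log\widetilde Z_N - \gamma \log T_N(\bV)$ with $\gamma := (dN-1)/(2+\alpha)$ and $T_N(\bV) := \sum_i |\bv_i|^{2+\alpha}$, as
\[
\frac{S_N}{N} \;=\; -\frac{\log\widetilde Z_N}{N} \;-\; \frac{\gamma}{N}\, \mathbb E_{F_N}[\log T_N].
\]
Both pieces individually diverge like $\log N$, and the corollary amounts to extracting their precise constants and checking cancellation. The first piece is easy: the manipulations in the proof of Lemma \ref{basic} show that $\widetilde Z_N = c^N Z_N$ (the exponential integrand there differs from $\prod_j g(\bw_j r)$ exactly by the factor $c^N$), so Corollary \ref{korollar} together with Stirling's formula $\log\Gamma(x) = x\log x - x + O(\log x)$ yields
\[
\frac{\log\widetilde Z_N}{N} \;=\; \log c + \frac{d}{2+\alpha} - \frac{d}{2+\alpha}\log\!\left(\frac{dN}{2+\alpha}\right) + o(1).
\]

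The heart of the argument is the asymptotic $\mathbb E_{F_N}[\log T_N] = \log N + \log m + o(1)$, where $m := \int_{\mathbb R^d} |\bv|^{2+\alpha}f(\bv)\,d\bv$. I would introduce the one-parameter family
\[
\widetilde Z_N(\beta) := \int_{\cS^{dN-1}(\sqrt N)} T_N(\bV)^{-\beta}\, d\sigma(\bV), \qquad \widetilde Z_N = \widetilde Z_N(\gamma),
\]
and exploit $\mathbb E_{F_N}\bigl[T_N^{\pm a}\bigr] = \widetilde Z_N(\gamma \mp a)/\widetilde Z_N(\gamma)$. Repeating the integral manipulations of Lemma \ref{basic} with $\gamma$ replaced by a general $\beta$ expresses $\widetilde Z_N(\beta)$ as a $|\bW|$-weighted integral of $\prod_j g(\bw_j)$ over $\mathbb R^{dN}$ with weight $|\bW|^{-(dN - (2+\alpha)\beta)}$ and prefactor involving $\Gamma(\beta)^{-1} N^{(dN-1-(2+\alpha)\beta)/2}$. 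At $\beta = \gamma \mp a$ the exponent of $|\bW|$ becomes $1 \pm (2+\alpha)a$, and the argument in the proof of Lemma \ref{example} (AM--GM together with Chebyshev against the concentration shell $|\bW|^2/N \approx \mu$, plus Lemma \ref{pbound} on its complement) extends to any such slightly perturbed exponent, giving
\[
\mathbb E_{F_N}\bigl[(T_N/N)^{\pm a}\bigr] \;\longrightarrow\; m^{\pm a}, \qquad N\to\infty,
\]
for every sufficiently small $a>0$; here the elementary identity $\mu^{(2+\alpha)/2}\, m = d/(2+\alpha)$ (a direct substitution $u = \sqrt\mu\,\bv$ in the definition of $m$) is used to identify the limit.

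To turn these moment limits into a limit for $\mathbb E_{F_N}[\log(T_N/N)]$ I would apply the concavity sandwich
\[
\frac{(T_N/N)^{a} - 1}{a}\ \ge\ \log(T_N/N)\ \ge\ \frac{1 - (T_N/N)^{-a}}{a},\qquad a > 0,
\]
which holds pointwise for any positive random variable, take $\mathbb E_{F_N}$, send $N\to\infty$, and then $a\to 0^+$ so that both outer bounds collapse to $\log m$. Inserting this into the decomposition of $S_N/N$, the $\log N$ contributions annihilate (using $\gamma/N \to d/(2+\alpha)$) and $S_N/N \to \log(\mu^{d/2}/c) - d/(2+\alpha)$. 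The remaining equality $\int f\log f\,d\bv = \log(\mu^{d/2}/c) - d/(2+\alpha)$ in \eqref{SSN} is a one-line check: substitute $\log f(\bv) = \log(\mu^{d/2}/c) - \mu^{(2+\alpha)/2}|\bv|^{2+\alpha}$ from \eqref{exin} and use $\mu^{(2+\alpha)/2}\, m = d/(2+\alpha)$.

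The main technical obstacle is making the Lemma \ref{example}-type estimate for the $|\bW|^{-(1 \pm (2+\alpha)a)}$ integrals sufficiently uniform in the small perturbation $a$ so that the iterated limit $N\to\infty$ followed by $a\to 0^+$ in the sandwich is legitimate. The AM--GM / Chebyshev / Lemma \ref{pbound} machinery used in the proof of Lemma \ref{example} is flexible enough to cover this, but the bookkeeping of error rates is the one place where the existing proof must be revisited rather than cited verbatim.
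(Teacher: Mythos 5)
Your proposal is correct, and it reaches \eqref{SSN} by a genuinely different route than the paper. The paper differentiates the representation \eqref{exp} in the exponent (via $x\log x=\lim_{\delta\to0}(x^{1+\delta}-x)/\delta$), which produces a digamma term $\gamma\psi(\gamma)$ handled by Stirling together with the logarithmically weighted integral $\int\log\bigl(|\bW|/\sqrt N\bigr)\tfrac{\sqrt N}{|\bW|}\prod_i g(\bw_i)\,d\bW$; evaluating that integral forces a log-modified extension of Lemma \ref{example} (an AM--GM bound carrying $\sum_i|\log|\bw_i||$ factors on $\{|\bW|<\sqrt N\}$, plus the bound $\log x/x\le 1/e$ on $\{|\bW|\ge\sqrt N\}$). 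You instead isolate $\mathbb{E}_{F_N}[\log(T_N/N)]$ and recover the logarithm from the fractional moments $\mathbb{E}_{F_N}[(T_N/N)^{\pm a}]=\widetilde Z_N(\gamma\mp a)/\widetilde Z_N(\gamma)$, which after the Lemma \ref{basic} manipulation reduce to Lemma \ref{example} applied with the \emph{fixed} exponents $1\pm(2+\alpha)a$ — so no new concentration estimate with logarithmic weights is needed, only the standard asymptotics $\Gamma(\gamma)/\Gamma(\gamma\mp a)\sim\gamma^{\pm a}$, which plays exactly the role of the paper's digamma term. Your constant bookkeeping checks out: $\mu^{\frac{2+\alpha}{2}}m=\tfrac{d}{2+\alpha}$ is indeed correct, and with it the $\log N$ terms cancel and the limit is $\log(\mu^{d/2}/c)-\tfrac{d}{2+\alpha}=\int f\log f$. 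One remark: the ``technical obstacle'' you flag at the end is not actually there — no uniformity in $a$ is required, since for each fixed small $a$ you may take $\limsup_N$ and $\liminf_N$ in the pointwise sandwich $\tfrac{x^a-1}{a}\ge\log x\ge\tfrac{1-x^{-a}}{a}$, obtaining $\tfrac{m^a-1}{a}\ge\limsup_N\mathbb{E}_{F_N}[\log(T_N/N)]\ge\liminf_N\mathbb{E}_{F_N}[\log(T_N/N)]\ge\tfrac{1-m^{-a}}{a}$, and only then send $a\to0^+$; also note $T_N/N\ge1$ on $\cS^{dN-1}(\sqrt N)$ by the power-mean inequality, so all these expectations are trivially finite. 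In short, your scheme trades the paper's log-weighted law-of-large-numbers estimate for an analytic-family/moment argument that reuses Lemma \ref{example} verbatim, at the modest cost of introducing $\widetilde Z_N(\beta)$ and the Gamma-ratio asymptotics.
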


\begin{proof}
We will just report here the minor modification to the proof of Lemma
\ref{limitnorm} needed to prove the corollary. We observe that
\[
 x\log x=\lim_{\delta\to 0} \frac{x^{1+\delta}-x}{\delta} \ .
\]
Applying this to \eqref{exp} we get
\[
 A^{-\gamma}\log A^{-\gamma}=\frac{1}{\Gamma(\gamma)}\int_0^\infty
s^{\gamma}\log s^{\gamma}e^{-As}ds-\gamma\psi(\gamma)A^{-\gamma}
\]
where $\psi(x)=\Gamma'(x)/\Gamma(x)$ is the Digamma
function. Following the proof of Lemma \ref{basic}, we find
\begin{eqnarray}\label{SN}
\frac{S_N}{N}&=& -\frac{\log\tilde
Z_N}{N}-\frac{dN-1}{(2+\alpha)N}\psi\left(\frac{dN-1}{2+\alpha}
\right)+\nonumber\\
&&\frac{dN-1}{N}\frac{(2+\alpha)}{\Gamma\left(\frac{dN-1}{2+\alpha}\right)Z_N}
\int_ {\mathbb{R}^{dN}}\log\left(\frac{|\bW|}{\sqrt{N}}\right)\frac{
\prod_{i=1}^Ng(\bw_i)}{|\bW|}\,d\bW
\end{eqnarray}
where $\tilde Z_N=c^{N}Z_N$. Using Stirling formula we get that
\[
 \lim_{N\to\infty}\left(\frac{\log\tilde
Z_N}{N}-\frac{dN-1}{(2+\alpha)N}\psi\left(\frac{dN-1}{2+\alpha}
\right)\right)=-\log c-\frac{d}{2+\alpha}
\]
Finally we need to compute the integral in the last term of \eqref{SN}.
This can be done exactly like in Corollary \ref{korollar} after a simple 
extension of the result in Lemma \ref{example}. Again, we set
$$
A_\varepsilon = \Big\{ \bW : \left| \frac{|\bW|^2}{N} - \mu\right| < \varepsilon \Big \} \ .
$$
For $x <1$ the function $|(\log x)/x|$ is
increasing so that from the inequality of the arithmetic and geometric mean we get
\[
\left|\log\left(\frac{|\bW|}{\sqrt{N}}\right)\frac{\sqrt{N}}{|\bW|}\right|\leq
\frac{\sum_{i=1}^N\left|\log|\bw_i|\right|}{N}\prod
\frac{1}{|\bw_i|^{\frac{1}{N}}}\qquad\mathrm{for\ \ \ 
}\frac{|\bW|}{\sqrt{N}}<1
\]
Proceeding like in the proof of Lemma \ref{example} we only need to modify 
\eqref{esti} as
\begin{align*}
\int_{A^c_\varepsilon, |\bW| <  \sqrt N} \left | \log\left(\frac{|\bW|}{\sqrt{N}}\right) \right| &\frac{\sqrt 
N}{|\bW|} \prod_{j=1}^N 
g(w_i) dw_i \le  \\
&\int_{A^c_\varepsilon}  \prod_{j=1}^N \gamma_N(w_i) dw_i
\left(\int_{\mathbb R^d} g(w) |w|^{-\frac{1}{N}}  dw\right)^{N-1}\int_{\mathbb 
R^d} g(w) \log(|w|)|w|^{-\frac{1}{N}}  dw
\end{align*}
and observe that
\[
\int_{\mathbb 
R^d} g(w) \log(|w|)|w|^{-\frac{1}{N}}  dw<C'
\]
for some constant $C'$ and $N$ large enough. For $x \ge 1$, the non-negative function 
$\frac{\log x}{x}$ is bounded  by $\frac{1}{e}$. Hence
\begin{equation}
\int_{A^c_\varepsilon, |\bW| \ge \sqrt N}  \log\left(\frac{|\bW|}{\sqrt{N}}\right) \frac{\sqrt 
N}{|\bW|} \prod_{j=1}^N 
g(w_i) dw_i \le  
\frac{1}{e} \int_{A^c_\varepsilon}  \prod_{j=1}^N 
g(w_i) dw_i \le  \frac{s^2}{e \varepsilon^2N} 
\end{equation}
using \eqref{tcheb}.
Including the integral over $A_\varepsilon$ and setting $\varepsilon = N^{-1/8}$ yields
\[
 \lim_{N\to\infty}\int_
{\mathbb R^{dN}}\log\left(\frac{|\bW|}{\sqrt{N}}\right)\frac{
\sqrt{N}}{|\bW|}\prod_{i=1}^Ng(\bw_i)\,d\bW=\frac{\log\sqrt{\mu}}{\sqrt{\mu}} \ .
\] 
Combining the above computations with \eqref{SN} we get the first equality in \eqref{SSN}. The
second equality is immediate.

\end{proof}

Another interesting extension is with regards to the first order correction in $\bE$. In
\cite{BCKLs}, under the assumption that the limit $|\bE|\to0$ exists, it was
shown that
\[
 F_{\rm ss}(\bV,\bE)=\delta(U(\bV)-N)\left(F_N(\bV)+
\sum_{i=1}^N \bE\cdot\bc(\omega_i)|\bv_i|R_{N}(\bV)+o(|\bE|)\right) \ ,
\]
where $\bv_i=|\bv_i|\omega_i$ and 
\[
 R_{N}(\bV)=\frac{1}{\widetilde{Z}_N}\frac{dN-1}{\left(\sum_{i=1}^N
|\bv_i|^{2+\alpha}\right)^{\frac{dN-1}{2+\alpha}+1}
}=\frac{1}{|\bv_1|^{2+\alpha}}\bv_1\cdot\nabla_{\bv_1}F_N(\bV) \ .
\]
Here $\bc(\omega)$ is the unique solution of
\[
 \left[(\mathrm{Id}-\cK)\bc\right](\omega)=\omega \ ,
\]
where $\cK$ is the convolution operator generated by $K$, that is
\begin{equation*}
 \left(\cK\bc\right)(\omega)=\int_{\cS^{d-1}(1)}
K(\omega\cdot\omega')\bc(\omega')d\sigma^ { d-1 }
(\omega').
\end{equation*}
Because $-\bc(-\omega)$ is also a solution if $\bc(\omega)$ is, we have, by
uniqueness, that $\bc(\omega)=-\bc(-\omega)$. As a consequence
\[
 \int_{\cS^{d-1}(1)}\bc(\omega')d\sigma^{d-1}(\omega')=0.
\]

Calling $r_{N}^{(k)}$ the marginal of $R_{N}$, we get that
\[
 r_{N}^{(k)}(\bv_1,\ldots,\bv_k)=\frac{1}{|\bv_1|^{2+\alpha}}
\bv_1\cdot\nabla_{\bv_1}f_N^{(k)}(\bv_1,\ldots,\bv_k)
\]
It is easy to see, from \eqref{inte}, that we can take the limit for
$N\to\infty$ on both side and obtain
\[
 \lim_{N\to\infty} r_{N}^{(k)}(\bv_1,\ldots,\bv_k)=r(\bv_1)\prod_{i=2}^k 
f(\bv_k)
\]
where
\[
 r(\bv)=\frac{1}{|\bv_1|^{2+\alpha}}
\bv\cdot\nabla_{\bv}f(\bv)=(2+\alpha)\mu^{\frac{2+\alpha}{2}}f(\bv) \ .
\]
Combining the above results we get that the $k$ particle marginal of $F_{\rm
ss}$ is 
\begin{equation}\label{1oN}
 \lim_{N\to\infty} f_{\rm
ss}^{(k)}(\bv_1,\ldots,\bv_k;\bE)=\left(1+(2+\alpha)\mu^{\frac{2+\alpha}{2}}
\sum_{i=1}^k \bE\cdot\bc(\omega_i)|\bv_i|\right)\prod_{i=1}^k f(\bv_k)+o(|E|)
\end{equation}

This is consistent with the results on the Boltzmann equation \eqref{BE}. To
solve the steady state equation of \eqref{BE} one as to make an assumption on
the form of $\hat\bj_{\rm ss}(E)$ for small $|E|$. It is natural to assume that
\begin{equation}\label{cond}
\hat\bj_{\rm ss}(E)=\tau\underline\kappa E+o(|E|)
\end{equation}
where $\underline\kappa$ is the conductivity tensor for the system with one
particle and energy 1, that is
\[
 \underline\kappa=\frac{1}{|\cS^{d-1}(1)|}\int_{\cS^{d-1}(1)}
\bc(\omega)\otimes\omega\,d\sigma^{d-1}(\omega).
\]
Under this assumption one finds that
\begin{equation}\label{1oinfty}
 f_{\rm
ss}(\bv,\bE)=\left(1+(2+\alpha)\nu^{\frac{2+\alpha}{2}}
\bE\cdot\bc(\omega)|\bv|\right)\tilde f(\bv)+o(|E|)
\end{equation}
where
\[
 \tilde f(\bv)=\frac{\nu^{\frac{d}{2}}}{b}e^{-(\sqrt{\nu} |\bv|)^{2+\alpha}} \ ,
\]
with $\nu$ and $b$ uniquely determined by normalization and \eqref{cond}. One
can also see that the average energy of this solution is
\[
u= \int_{\mathbb{R}^d}
|v^2|\tilde f(\bv)\,dv=\left(\frac{\nu}{\mu}\right)^{\frac{2+\alpha}{2}} \ ,
\]
so that, requiring $u=1$ we obtain once more the large $N$ limit of the one particle
marginal of $F_{\rm ss}$.
Clearly the first order in $E$ of the $k$-fold tensor product of
\eqref{1oinfty} yields \eqref{1oN}. Note that, if the energy per particle is $1$,  the above
results tell us that the current per particle at small field for a large
system is $(2+\alpha)\mu^{\frac{2+\alpha}{2}}$ times the current of the one
particle system.

\subsection*{Acknowledgment} We are indebted to Joel Lebowitz, Ovidiu Costin
and Eric Carlen form many enlightening discussions.  M.L. was supported in part by  NSF grant DMS--0901304.

\bibliographystyle{unsrt}
\bibliography{nonequi}

\end{document}